\newtheorem{prop}{Proposition}
\newtheorem{lem}{Lemma}
\newtheorem{thm}{Theorem}
\begin{document}

\title{Strong quantum solutions in conflicting-interest Bayesian games}


\author{Ashutosh Rai}
\email{arai@iip.ufrn.br}

\affiliation{
	International Institute of Physics, Federal University of Rio Grande do Norte, 59070-405 Natal, Brazil}
\affiliation{Center for Quantum Computer Science, University of Latvia, Raina Bulv. 19, Riga, LV-1586, Latvia}

\author{Goutam Paul}
\email{goutam.paul@isical.ac.in}
\affiliation{Cryptology and Security Research Unit, R. C. Bose Centre for Cryptology and Security, Indian Statistical Institute, Kolkata 700108, India}


\begin{abstract}
Quantum entanglement has been recently demonstrated as a useful resource in conflicting interest games of incomplete information between two players, Alice and Bob \href{http://journals.aps.org/prl/abstract/10.1103/PhysRevLett.114.020401}{[Pappa \emph{et al.}, Phys. Rev. Lett. {\bf 114}, 020401 (2015)]}. General setting for such games is that of correlated strategies where the correlation between competing players is established through a trusted common adviser; however, players need not reveal their input to the adviser. So far, quantum advantage in such games has been revealed in a restricted sense. Given a quantum correlated equilibrium strategy, one of the players can still receive a higher than quantum average payoff with some classically-correlated equilibrium strategy. In this work, by considering a class of asymmetric Bayesian games, we show the existence of games with quantum correlated equilibrium where average payoff of both the players exceed respective individual maximum for each player over all classically-correlated equilibriums. 
\end{abstract}


\maketitle


\section{Introduction}
Quantum entanglement is an invaluable resource for information processing tasks \cite{ben1,ben2,hor}. Entanglement gives rise to the phenomenon of quantum nonlocality, which shows that non-communicating spatially separated parties, by presharing parts of an entangled quantum state, can generate correlations which cannot be explained by any local realistic theory \cite{epr,Bell'1964,nl}. Several applications of nonlocal correlations exists, for example, in quantum cryptography \cite{eka} and nonlocal games of full cooperation \cite{cleve}. However, applications of quantum nonlocality to achieve the best possible solution in Bayesian games of conflicting interests is a very recent development \cite{Pappa,Roy'2016,Auletta'2016}.

Games usually model the situation of some conflict between a given number of parties \cite{osb}. An interesting connection between Bell nonlocality \cite{Bell'1964} (which can be realized in quantum mechanics through entangled states) and Bayesian games introduced by Harsanyi \cite{Harsanyi'1967} was established in \cite{Brunner'2013}. Soon after, an explicit example of a two-party game with conflicting interests where an entangled state leads to a better solution was provided in \cite{Pappa} and has inspired a number of interesting works along this direction \cite{Roy'2016,Auletta'2016,Banik'2017,citu1,citu2,ramon,lason}. There had also been previous attempts to devise quantum strategies for \emph{non}-Bayesian games, providing an advantage under certain specific restrictions \cite{eisert1}; however, the physical applicability of such results had been debated \cite{benj,eisert2,van}. Subsequent progress made in Refs. \cite{Brunner'2013,Pappa,Roy'2016,Auletta'2016,Banik'2017,citu1,citu2,ramon,lason,iqbal} is essentially along the direction outlined in Ref. \cite{van}.

In a Bayesian game the competing parties have only partial information about the whole setting in which the game is played \cite{Harsanyi'1967}. For example, each player may have some random private information, such as her (his) input, unknown to other players. In general, to resolve the conflict in the best possible way, players can arrange a common trusted adviser facilitating correlated strategies \cite{Aumann'1987}. By resolution of conflicts it means all players agree to adopt a strategy which is a Nash equilibrium \cite{nash}. 

\emph{Nash equilibrium.} The Nash equilibrium is defined as follows: In an $n$ players game, let $\mathbf{S}$ be the collection of all possible global strategies derived from each player's local strategies. Let $s^*_i$ for $i\in\{1,...,n\}$ be some local strategy of the $i$th player. Then a global strategy $s^*=(s^*_1,...,s^*_n)\in \mathbf{S}$ is a Nash equilibrium when, if any one player, say $j$th, fixes her or his local strategy to $s^*_j$ and then none of the remaining players can gain by making unilateral changes in their local strategies. 

In this work, we demonstrate a qualitatively stronger quantum advantage in Bayesian games of conflicting interests. For this, we introduce a class of asymmetric two-player Bayesian games with conflicting interests. The general setting for these games is that of correlated strategies where a common trusted adviser can provide the two players with some classical correlation or some entangled state as an advice. Input to each player is generated uniformly at random and is fully private information not revealed to the adviser. Payoff to each player in a given round depends on joint input (type) and joint output (action) of both the players. The measure of the reward to each player, on adopting a certain strategy, is the \emph{average payoff} when the game is played for many rounds with that strategy. 

Many real-world problems can be modeled through such games; some interesting examples of such models can be found in Refs. \cite{Pappa,Brunner'2013,Auletta'2016}. Assumption of an uninformed (nevertheless trusted) adviser insures that no information about one player's input can leak to the other player; this assumption has been motivated in \cite{Brunner'2013} as no-signaling advice and in \cite{Auletta'2016} as belief-invariant advice. 
 
Our results can be summarized as follows. We define a one-parameter family of asymmetric games (which includes the game defined in~\cite{Pappa} as a special case). For the considered class of game, we compute and analyze uncorrelated Nash equilibriums. Next, we analyze classically correlated Nash equilibriums for our games and find that the sum of average payoff of two players in these games is equivalent to the Bell--Clauser-Horne-Shimony-Holt (CHSH) expression up to scaling. Then we show that with Popescu-Rohrlich--box ($PR$-box) correlations the achieved Nash equilibrium has a strong feature (also noted in~\cite{Roy'2016}). Finally, we discuss a quantum correlated strategy and show that in a certain parameter range there are quantum Nash equilibriums in which individual rewards to each player exceed their respective average payoff over the set of all possible classically correlated Nash equilibriums. This is the main result of our paper.
 
\section{Class of two party Bayesian games}
Consider a class of Bayesian games, $\{G(\epsilon): 0\leq \epsilon\leq 3/4\}$, played between two spatially separated players, Alice and Bob. In each single round of the game, both the players receive some type (input) from the set $\{0,1\}$. No player has any information about the type (input) of the other player. Let types of Alice and Bob be denoted by $x_A$ and $x_B$, respectively, and suppose that joint input $x=(x_A,x_B)$ is sampled from a uniform distribution. After receiving their types, players perform some action $y=(y_A,y_B)$ where $y_A$ is an action (output) of Alice and $y_B$ is an action of Bob. The utility (payoff) function, $u_i(x,y)$ where $i\in\{A,B\}$, of the game is given in Table (\ref{tab1}), where we use the symbol $A$ ($B$) as a short hand for Alice (Bob). 
\begin{center}
\begin{table}[h]
\centering
\caption{Utility functions of Alice and Bob for the considered class of games $G(\epsilon)$ where $0\leq \epsilon \leq 3/4$.}
\begin{tabular}{cc|c}
\hline \hline
    & ~~~~~~$\mathbf{x_A\wedge x_B=0}$ &  ~~~~~~$\mathbf{x_A\wedge x_B=1}$ \\ \hline
 & $~~~~~~~~\mathbf{y_B=0} ~~~~~~~~~~\mathbf{y_B=1}$~~~~&~~~~$\mathbf{y_B=0}~~~~~~~~~~\mathbf{y_B=1}$ \\ 

 $\mathbf{y_A=0}$& $(1-\epsilon,~~\frac{1}{2}+\epsilon)~~~~(~~0,~~0~~)$ & ~~$(~~0,~~~~~~0~~)~~~~~(~\frac{3}{4},~\frac{3}{4}~)$\\

 $\mathbf{y_A=1}$& $~~(~~0~,~~~~~~0~~)~~~~~(~~\frac{1}{2},~~1~~)$ & $(\frac{3}{4}-\epsilon,~\frac{3}{4}+\epsilon) ~~~~(~0,~0~)$\\
 
 \hline \hline
\end{tabular}
\label{tab1}
\end{table}
\end{center}

In Table (\ref{tab1}), utilities of Alice and Bob are ordered as $(u_A,u_B)$. Note that for achieving high payoffs both players need to correlate their actions when $x_A\wedge x_B=0$ and anticorrelate their actions when $x_A\wedge x_B=1$.
When $x_A\wedge x_B=0$: (i) for $0\leq \epsilon< 1/2$, Alice receives a higher payoff when the correlated action of Alice and Bob is $(0,0)$ and Bob receives a higher payoff when the correlated action of Alice and Bob is $(1,1)$; (ii) for $\epsilon=1/2$, correlating either way gives the same payoff to both the players; and
(iii) for $1/2<\epsilon \leq 3/4$, Alice benefits from $(1,1)$ and Bob from $(0,0)$ correlation. On the other hand, when $x_A\wedge x_B=1$, Alice prefers $(0,1)$ and Bob prefers $(1,0)$ correlated outcome for the entire range of $\epsilon$. 

The average payoff that each player receives in games $G(\epsilon)$ after a large number of rounds can be computed from Table (\ref{tab1}) and can be expressed as follows:
\begin{eqnarray}
\left\langle u_A\right\rangle &=& \frac{1}{4}\left\{(1-\epsilon) P(00|00)+ \frac{1}{2}P(11|00)\right\}\nonumber\\&+&\frac{1}{4}\left\{ (1-\epsilon)P(00|01)+ \frac{1}{2}P(11|01)\right\}\nonumber \\ &+&\frac{1}{4}\left\{(1-\epsilon) P(00|10)+ \frac{1}{2}P(11|10)\right\}\nonumber \\ &+&\frac{1}{4}\left\{(\frac{3}{4}-\epsilon) P(10|11)+ \frac{3}{4} P(01|11)\right\}. \label{fa}
\end{eqnarray} 

\begin{eqnarray}
\left\langle u_B\right\rangle &=& \frac{1}{4}\left\{ (\frac{1}{2}+\epsilon)P(00|00)+ P(11|00)\right\}\nonumber\\&+&\frac{1}{4}\left\{(\frac{1}{2}+\epsilon) P(00|01)+ P(11|01)\right\}\nonumber \\ &+&\frac{1}{4}\left\{ (\frac{1}{2}+\epsilon)P(00|10)+ P(11|10)\right\}\nonumber \\ &+&\frac{1}{4}\left\{ (\frac{3}{4}+\epsilon)P(10|11)+ \frac{3}{4} P(01|11)\right\}.
\label{fb}
\end{eqnarray} 

 We conclude this section by noting the following: (i) The game $G(0)$ corresponding to $\epsilon=0$ is the symmetric conflicting-interest game introduced and analyzed by Pappa and co-authors in \cite{Pappa}. (ii) For $\epsilon \neq 0$, all the games $G(\epsilon)$ are asymmetric and Bob has some advantage over Alice. (Such situations can occur, for instance, in example $3$ of the game discussed in \cite{Brunner'2013}, when the two competing companies $A$ and $B$ can be of different size.)

\section{Uncorrelated Nash equilibrium}
Consider all the possible pure (deterministic) strategies for playing the game. Then, each player can choose from the following four strategies: 
\begin{eqnarray}
S_1^{i}&:& y_i=0, \nonumber \\
S_2^{i}&:& y_i=1, \nonumber \\
S_3^{i}&:& y_i=x_i, \nonumber \\
S_4^{i}&:& y_i=x_i\oplus 1, \label{eq3}
\end{eqnarray}
where $x_i$ is the uniformly random local input bit of the $i$th player and $y_i$ is the output bit according to any of the four possible deterministic strategies $\{S_1^{i},S_2^{i},S_3^{i},S_4^{i}\}$; here the index $i\in\{A,B\}$ is the label for the two players Alice and Bob. Thus there are a total of $16$ pure strategies for playing the game. Table (\ref{tab2}) lists the average payoff for each player for all possible pure strategies.

From Table (\ref{tab2}) one can easily find all the pure Nash equilibriums for the considered class of games $G(\epsilon)$. In the parameter range $0\leq \epsilon\leq 1/4$ the strategies $(S_1^{A}, S_3^{B})$, $(S_3^{A}, S_4^{B})$, and $(S_4^{A}, S_2^{B})$ are Nash equilibriums; for $1/4\leq \epsilon \leq 1/2$ the strategies $(S_1^{A}, S_1^{B})$, $(S_3^{A}, S_4^{B})$, and $(S_4^{A}, S_2^{B})$ are Nash equilibriums; and for $1/2\leq \epsilon \leq 3/4$, $(S_1^{A}, S_1^{B})$, $(S_2^{A}, S_4^{B})$, and $(S_4^{A}, S_3^{B})$ are Nash equilibriums. For all values of $\epsilon$ there are multiple equilibriums, and one can easily check that the two players will have different preferred equilibrium and therefore there are conflicting interests in this game for all values of $\epsilon$. For example, in the parameter range $0\leq \epsilon <1/4$, the most preferred strategy for Alice is $(S_1^{A}, S_3^{B})$ whereas Bob gives highest preference to $(S_4^{A}, S_2^{B})$.

The two players can also play this game by adopting some mixed strategies. In a mixed strategy Alice (Bob) chooses to implement from four pure strategies given in Eq. (\ref{eq3}) according to some probability distribution. First we note that any pure equilibrium point remains an equilibrium point in the general setting, which extends to all possible mixed strategies. Are there any other uncorrelated mixed Nash equilibriums for this game? In general it is mathematically complex to find all mixed Nash equilibriums, and in the latter part of the paper we will note that it is not necessary to explicitly compute them for our purposes.  

\begin{widetext}
\begin{center}
\begin{table}[h]
\centering
\caption{\label{tab2}Average payoffs of Alice and Bob ordered as $(\left\langle u_A\right\rangle, \left\langle u_B\right\rangle)$ for different
pure strategies for the considered class of games $G(\epsilon)$. In parameter range $0\leq \epsilon\leq 1/4$ the strategies $(S_1^{A}, S_3^{B})$, $(S_3^{A}, S_4^{B})$, and $(S_4^{A}, S_2^{B})$ are Nash equilibriums, marked in bold with round brackets; for $1/4\leq \epsilon \leq 1/2$, $(S_1^{A}, S_1^{B})$, $(S_3^{A}, S_4^{B})$, and $(S_4^{A}, S_2^{B})$ are Nash equilibriums, marked in bold with curly brackets; and for $1/2\leq \epsilon \leq 3/4$, $(S_1^{A}, S_1^{B})$, $(S_2^{A}, S_4^{B})$, and $(S_4^{A}, S_3^{B})$ are Nash equilibriums, marked in bold with square brackets.}
\begin{tabular}{c|cccc}
 \hline \hline \\
  A/B     & $S_1^{B}$ &  $S_2^{B}$ ~~~~& $S_3^{B}$ &~~~~ $S_4^{B}$\\ 
  \hline \\
	
	$S_1^{A}$ & \boldmath $\left[\left\{\frac{3}{4}-\frac{3\epsilon}{4},\frac{3}{8}+\frac{3\epsilon}{4}\right\}\right]$~~~~&  $(\frac{3}{16},\frac{3}{16})$ ~~~~& \boldmath $\left(\frac{11}{16}- \frac{\epsilon}{2},\frac{7}{16}+ \frac{\epsilon}{2}\right)$ ~~~~&$(\frac{1}{4}-\frac{\epsilon}{4},\frac{1}{8}+\frac{\epsilon}{4})$ \\  \\

$S_2^{A}$ & $(\frac{1}{8},\frac{1}{4})$~~~~&  $(\frac{3}{8},\frac{3}{4})$ ~~~~& $(\frac{1}{8},\frac{1}{4})$ ~~~~&\boldmath$\left[\frac{7}{16}-\frac{\epsilon}{4},\frac{11}{16}+\frac{\epsilon}{4}\right]$\\  \\

 $S_3^{A}$&$ (\frac{11}{16}-\frac{3\epsilon}{4},\frac{7}{16}+\frac{3\epsilon}{4})$~~~~&  $(\frac{1}{8},\frac{1}{4})$~~~~ & $(\frac{1}{4}-\frac{\epsilon}{4},\frac{1}{8}+\frac{\epsilon}{4})$ ~~~~&\boldmath $\left\{\left( \frac{9}{16}-\frac{\epsilon}{2},\frac{9}{16}+\frac{\epsilon}{2}\right)\right\}$\\ \\

 $S_4^{A}$& $(\frac{1}{4}-\frac{\epsilon}{4},\frac{1}{8}+\frac{\epsilon}{4})$~~~~&  \boldmath$\left\{\left(\frac{7}{16},\frac{11}{16}\right)\right\}$~~~~ & \boldmath$\left[\frac{9}{16}-\frac{\epsilon}{4},\frac{9}{16}+\frac{\epsilon}{4}\right]$ ~~~~&$(\frac{1}{8},\frac{1}{4})$\\ \\
 
 \hline \hline
\end{tabular}
\end{table}
\end{center}
\end{widetext}

\section{Classical Correlated strategies}
In a correlated strategy for the game, a trusted common adviser is introduced. It is assumed, as in \cite{Pappa}, that the adviser has no information about types that each player receive in any given round, meaning that the type of each player in any given round is fully private. Therefore, the adviser can send advice even before the start of every round of the game. Any classically correlated advice is simply a set of separate instructions to both players to implement a strategy $(S_i^A, S_j^B)$ according to some probability distribution $\{p_{ij}: i,j\in\{1,2,3,4\}\}$. The advice from the adviser is like a preshared correlation between the two players before receiving their types (for example, preshared randomness). It is well known that such an assumption leads to bounds on certain functions of joint conditional probabilities of player outputs which are known as Bell-type inequalities.

It turns out that the class of games that we consider has a nice connection with the Bell-CHSH inequality 
 $\mathbb{B}=\langle A_0B_0\rangle+\langle A_0B_1 \rangle+\langle A_1B_0 \rangle-\langle A_1B_1\rangle \leq 2 $ \cite{chsh}. It can be shown (see Appendix A for steps in the derivation) that the sum of average payoff of the two players is related to the value of the Bell-CHSH expression as $\left\langle u_A\right\rangle+\left\langle u_B\right\rangle=\frac{3}{16}(\mathbb{B}+4)$. Then from the local bound on the Bell-CHSH expression it immediately follows that
 \begin{equation}
 \left\langle u_A\right\rangle_{C}+\left\langle u_B\right\rangle_{C}\leq \frac{9}{8}, \label{fab} 
 \end{equation}
 where the subscript $C$ stands for \emph{classical}.
 Thus when the game is played with any classically correlated strategy the sum of average payoff of the two players is bounded as given by Eq. (\ref{fab}). 
 
On the other hand, the average payoff of the individual player Alice (Bob) for any classically correlated Nash equilibrium is bounded by the maximum of all average payoffs of Alice (Bob) when the game is played with uncorrelated pure strategies. These bounds can be computed from Table (\ref{tab2}) and are as follows: 

\begin{eqnarray}
    \left\langle u_A\right\rangle_{C}\leq
\begin{cases}
    \frac{3}{4}(1-\epsilon)& \text{for } ~0\leq \epsilon \leq \frac{1}{4},\\
    \frac{11}{16}-\frac{\epsilon}{2}              & \text{for }~ \frac{1}{4}< \epsilon \leq\frac{1}{2},\\
    \frac{7}{16}              &\text{for }~ \frac{1}{2}< \epsilon \leq \frac{3}{4}.
\end{cases}\label{eq7}
\end{eqnarray}

\begin{eqnarray}
    \left\langle u_B\right\rangle_{C}\leq
\begin{cases}
    \frac{3}{4}& \text{for }~ 0\leq \epsilon \leq \frac{1}{4},\\
    \frac{11}{16}+\frac{\epsilon}{4} & \text{for } ~\frac{1}{4}< \epsilon \leq\frac{1}{2},\\
    \frac{7}{16}+\frac{3\epsilon}{4}              &\text{for } ~\frac{1}{2}< \epsilon \leq \frac{3}{4}.
\end{cases}\label{eq8}
\end{eqnarray}

Note that these bounds for $\left\langle u_A\right\rangle_C$ and $\left\langle u_B\right\rangle_C$ may not be tight bounds on an average payoff for correlated Nash equilibriums. However, we will see in the following section that these bounds are sufficient to reveal a feature of strategies which use PR-box correlations in which they outperform all classically correlated strategies in the parameter range $ 0\leq \epsilon \leq 5/8$. Incidentally, for a different class of conflicting interest games, such a feature of strategies which use PR-box correlation was also revealed in a recent work \cite{Roy'2016}. 
 
\section{Nonlocal PR-Box strategy}
Suppose Alice and Bob share the $PR$-box correlation \cite{pr} provided to them as an advice from the adviser. These correlations are the extremal no-signaling nonlocal correlation with two input and two output bits. Although known not to exist in nature, these correlations are widely used as a conceptual tool to gain insight into plausible applications of physically assessable nonlocal quantum correlations.
 
\emph{Nonlocal strategy $PR^{*}$}---Alice (Bob) feed type $x_A$ ($x_B$) as an input bit to her (his) part of the $PR$-box and output bit $y_A$($y_B$) is the action bit. Then the joint probabilities of the player's actions given their types must satisfy

\begin{eqnarray}
    P(y_A,y_B|x_A,x_B)=
\begin{cases}
    \frac{1}{2}& \text{if }~ y_A\oplus y_B= x_A\wedge x_B,\\
    0 & \text{otherwise}. 
\end{cases}
\end{eqnarray}

With this strategy the average payoffs of the two players are as follows:
\begin{eqnarray}
\left\langle u_A\right\rangle_{PR^*}= \frac{1}{2}\left(\frac{3}{2}-\epsilon \right),\\
\left\langle u_B\right\rangle_{PR^*}= \frac{1}{2}\left(\frac{3}{2}+\epsilon \right).
\end{eqnarray}
\begin{prop}
The nonlocal strategy $PR^{*}$ is a Nash equilibrium for all $G(\epsilon)$ in the parameter range $ 0\leq \epsilon \leq 5/8$.
 \end{prop}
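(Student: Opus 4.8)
The plan is to verify the Nash condition head-on: show that, in the stated $\epsilon$-range, no unilateral deviation from $PR^{*}$ strictly raises the deviator's average payoff. Since $\langle u_A\rangle$ and $\langle u_B\rangle$ in Eqs.~(\ref{fa})--(\ref{fb}) are linear in the joint conditional distribution $P(y_A y_B\mid x_A x_B)$, and every distribution arising from a randomized deviation is a convex combination of those arising from deterministic ones, it suffices to rule out profitable \emph{deterministic} deviations.

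Take Alice deviating while Bob keeps his half of the $PR$-box and plays $PR^{*}$. Her most general deterministic deviation consists of, for each type, a bit $a(x_A)\in\{0,1\}$ that she feeds into her half of the box, together with a rule $g$ turning her type and box output $\alpha$ into her action $y_A=g(x_A,\alpha)$. Because Bob is honest, $y_B$ equals his box output $\beta$, which is a fair coin, and the box relation $\alpha\oplus\beta=a(x_A)\wedge x_B$ forces $\alpha=y_B\oplus\big(a(x_A)\wedge x_B\big)$. Hence the deviation realizes the distribution in which $y_B$ is uniform and $y_A=g\big(x_A,\,y_B\oplus(a(x_A)\wedge x_B)\big)$ is a deterministic function of $(x_A,x_B,y_B)$; substituting the resulting $P(\cdot\mid\cdot)$ into Eq.~(\ref{fa}) gives Alice's deviation payoff. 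The useful structural observation is that this payoff separates as $\tfrac12 F_0\big(a(0),g(0,\cdot)\big)+\tfrac12 F_1\big(a(1),g(1,\cdot)\big)$, so Alice may optimize the $x_A=0$ and $x_A=1$ branches independently, each over just the $2\times4$ choices of a fed bit and an output function.

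I would then carry out this small maximization branch by branch. For $x_A=0$ one finds that the honest choice $a(0)=0,\ g(0,\alpha)=\alpha$ is optimal; for $x_A=1$ the only competitive alternative is essentially to discard the box (feed $a(1)=0$ and output a constant), and comparing its value against the honest value leaves an inequality between two expressions affine in $\epsilon$ that holds throughout $0\le\epsilon\le5/8$. The same recipe applied to a deviation by Bob---now using Eq.~(\ref{fb}) and the asymmetric, $\epsilon$-dependent entries of Table~(\ref{tab1})---produces an analogous split over $x_B\in\{0,1\}$, and the binding comparison again reduces to an affine-in-$\epsilon$ inequality valid for $\epsilon$ up to $5/8$. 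Combining the two settles the proposition.

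I expect the genuine obstacle to be conceptual rather than computational: one must set up the deviation model correctly, remembering that a deviating player still holds their half of the box, so a deviation is a pre-processing of the type into the box input \emph{together with} a post-processing of the box output into the action, not merely an alternative action rule. Once that is in place the reason the equilibrium holds becomes transparent: whenever the deviator's effective box input is $0$, the honest player's uniform action is perfectly correlated with $\alpha$, so the deviator can never steer the honest player's action, and the separability of Eqs.~(\ref{fa})--(\ref{fb}) then collapses everything to the short case check above.
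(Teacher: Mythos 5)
Your proposal is correct, but it takes a genuinely different route from the paper. The paper does not enumerate the deviations you can actually implement; instead it relaxes the problem to the full two-input/two-output no-signalling polytope of Barrett \emph{et al.}, evaluates the linear functionals $\left\langle u_A\right\rangle$ and $\left\langle u_B\right\rangle$ at all $24$ vertices ($16$ local, $8$ nonlocal), and observes that the maximum sits at the $PR$-box. That relaxation is exactly what produces the cutoff $5/8$: for $\epsilon>5/8$ the local vertex corresponding to $(S_4^A,S_2^B)$ gives Alice $\tfrac{7}{16}>\tfrac34-\tfrac{\epsilon}{2}$, so the no-signalling maximum is no longer attained at $PR^*$ and the paper's sufficient condition fails there --- even though that vertex is not reachable by any unilateral deviation. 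Your approach instead parametrizes the deviations that actually are reachable (pre-processing of the type into the box input plus post-processing of the box output, reduced to deterministic choices by linearity), splits the payoff into independent $x_A=0$ and $x_A=1$ branches, and does a finite $2\times4$ check per branch. I verified your binding comparisons: for Alice the best alternative in the $x_A=1$ branch (discard the box, answer $0$) yields $\tfrac78-\tfrac{\epsilon}{2}$ per branch against the honest $\tfrac32-\epsilon$, and the analogous check for Bob gives $\tfrac58+\epsilon$ against $\tfrac32+\epsilon$; these hold for \emph{all} $0\le\epsilon\le3/4$, not just up to $5/8$. So your method, carried to completion, actually proves a strictly stronger statement than the proposition (equilibrium on the whole family), at the cost of a more delicate setup --- which you correctly identify as the real conceptual hurdle --- whereas the paper's polytope relaxation is computationally cleaner but intrinsically limited to $\epsilon\le5/8$. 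The only blemish is presentational: your phrasing suggests the affine inequality becomes tight at $\epsilon=5/8$, which it does not; the $5/8$ threshold is an artifact of the paper's relaxation, not of the direct deviation analysis.
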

 
 \begin{proof}
 Suppose Alice deviates from the strategy $PR^{*}$, whereas Bob's strategy is fixed. In general, Alice can deviate from the strategy by preprocessing the received type and feeding the result as an input to her end of the PR-box, and answer after some postprocessing of the outcome from the PR-box. Any such local preprocessing and postprocessing can lead to some new joint probability distribution $\tilde{P}(y_A,y_B|x_A,x_B)$ which can be different from the PR-box correlation given by the Eq. (7). However, it follows from the no-signaling principle that $\tilde{P}(y_A,y_B|x_A,x_B)$, obtained by local processing, must belong to the set of no-signaling distribution. The same argument holds when Alice keeps her strategy fixed and Bob deviates.
 
 Now the idea of the proof is that in the parameter range $ 0\leq \epsilon \leq 5/8$, it is sufficient to maximize $\left\langle u_A\right\rangle$ and $\left\langle u_B\right\rangle$, given by Eqs. (1) and (2) respectively, over the set of all no-signaling correlations. The set of no-signaling joint probability distributions for two parties with binary inputs and outputs is fully characterized by Barrett \emph{et al.} in \cite{bar}. The set forms an eight-dimensional polytope with $24$ vertices---$16$ local vertices and $8$ nonlocal vertices. Therefore, the maximum value of any linear function of joint probabilities $P(y_A,y_B|x_A,x_B)$ over the set of no-signaling correlations is achieved at some vertex of the polytope. 
 	
 By computing the value of $\left\langle u_A\right\rangle$, given by Eq. (1), for all $24$ vertices, we find that $\left\langle u_A\right\rangle_{max}=\frac{1}{2}\left(\frac{3}{2}-\epsilon \right)$ and this maximum value is achieved at the PR-box correlation given by the Eq. (7). This proves that Alice cannot increase her average payoff by any unilateral deviation. For the case when Alice keeps her strategy fixed and Bob deviates, by computing the value of $\left\langle u_B\right\rangle$, given by Eq. (2), for all $24$ vertices, we find that $ \left\langle u_B\right\rangle_{max}=\frac{1}{2}\left(\frac{3}{2}+\epsilon \right)$ and is achieved at the PR-box correlation given by the Eq. (7). This proves that Bob cannot increase his average payoff by any unilateral deviation.
 \end{proof}
 From the bounds given by conditions (\ref{eq7}) and (\ref{eq8}) it follows that the $PR^{*}$ strategy has a strong property that $\left\langle u_A\right\rangle_{PR^*}> \mbox{Max}~(\left\langle u_A\right\rangle_{C})$ and $\left\langle u_B\right\rangle_{PR^*}> \mbox{Max}~(\left\langle u_B\right\rangle_{C})$ over all possible classically correlated strategies and for all $\{G(\epsilon): 0\leq \epsilon \leq 5/8\}\}$. This result indicates that a similar feature may exist when $G(\epsilon)$ is played with quantum entanglement as an advice. In the following section we show that indeed this feature also holds for quantum strategies, albeit for a smaller range of the parameter $\epsilon$.

\section{Quantum strategy}
Consider a quantum protocol for playing $G(\epsilon)$, which is the well-known protocol for achieving the maximum violation for the Bell-CHSH inequality \cite{cleve}.
We give essentially the same protocol as in \cite{cleve}, but the shared entangled state and measurements by the players are slightly modified (rotated).

\emph{Quantum strategy $Q^{*}$.} Let the singlet state $|\Psi_{AB}^{-}\rangle=\frac{1}{\sqrt{2}}(|01\rangle-|10\rangle)$ be the quantum advice shared between Alice and Bob. The strategy of Alice is to measure $\sigma_z$ on receiving the type $0$ and $\sigma_x$ on receiving type $1$, whereas Bob performs measurement $-\frac{1}{\sqrt{2}}(\sigma_x+\sigma_z)$ on receiving $0$ and measurement $\frac{1}{\sqrt{2}}(\sigma_x-\sigma_z)$ on receiving $1$. The action of Alice and Bob is to answer with $0$ ($1$) when the $+1$ ($-1$) eigenstate clicks. With this quantum strategy for playing the game, the average payoffs of Alice and Bob are as follows:

\begin{eqnarray}
\left\langle u_A\right\rangle_{Q^*}= \frac{1}{4}\left(1+\frac{1}{\sqrt{2}}\right)\left(\frac{3}{2}-\epsilon \right),\\
\left\langle u_B\right\rangle_{Q^*}= \frac{1}{4}\left(1+\frac{1}{\sqrt{2}}\right)\left(\frac{3}{2}+\epsilon\right).
\end{eqnarray}
The sum of the average payoff $\left\langle u_A\right\rangle_{Q^*}+\left\langle u_B\right\rangle_{Q^*}= \frac{3}{4}\left(1+\frac{1}{\sqrt{2}}\right)$, and this value corresponds to the maximum Bell-CHSH violation in quantum mechanics \cite{cirelson}.

\begin{thm}
The quantum strategy $Q^*$ is a Nash equilibrium for all $G(\epsilon)$.
\end{thm}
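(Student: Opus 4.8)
The plan is to mimic the proof of the preceding Proposition, but since the correlation set available to a deviating player is now the quantum set (not a polytope), I would parametrise the deviation directly rather than enumerate vertices. Fix Bob's $Q^{*}$ measurements and let Alice deviate by an arbitrary local protocol: a (randomised) preprocessing of her type $x_A$, an arbitrary measurement on her qubit possibly after adjoining a local ancilla, and a postprocessing of the outcome. For each value $x_A\in\{0,1\}$ this is nothing more than a binary-outcome POVM on her qubit, i.e.\ a pair of effects $\{E_0^{x_A},\,I-E_0^{x_A}\}$ with $0\le E_0^{x_A}\le I$ (adjoining an ancilla does not enlarge this family). With Bob's effects fixed, each conditional probability $P(y_Ay_B|x_Ax_B)$ is affine in $E_0^{x_A}$, so $\langle u_A\rangle$ of Eq.~(1) is affine in each of $E_0^{0}$, $E_0^{1}$ separately; as the qubit effect set is convex with extreme points $\{0,\,|\psi\rangle\langle\psi|,\,I\}$, the supremum is attained with each $E_0^{x_A}$ either a constant-answer effect (two cases, dismissed by direct inspection) or a rank-one projector $\tfrac12(I+\hat a_{x_A}\cdot\vec\sigma)$. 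So it suffices to maximise over pairs of unit Bloch vectors $(\hat a_0,\hat a_1)$, i.e.\ over input-dependent projective spin measurements.

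Next I would feed in the singlet data: the one-party marginals vanish and $\langle(\hat u\cdot\vec\sigma)\otimes(\hat v\cdot\vec\sigma)\rangle_{\Psi^-}=-\hat u\cdot\hat v$, which give $P(00|x_Ax_B)=P(11|x_Ax_B)=\tfrac14(1-\hat a_{x_A}\cdot\hat b_{x_B})$ and $P(01|x_Ax_B)=P(10|x_Ax_B)=\tfrac14(1+\hat a_{x_A}\cdot\hat b_{x_B})$, where $\hat b_0,\hat b_1$ are the Bloch vectors of Bob's $Q^{*}$ observables. Substituting into Eq.~(1), a common factor $\tfrac{3}{2}-\epsilon$ emerges from every term and one gets
\[
\langle u_A\rangle=\frac{1}{16}\left(\frac{3}{2}-\epsilon\right)\left[\,4-\hat a_0\cdot(\hat b_0+\hat b_1)-\hat a_1\cdot(\hat b_0-\hat b_1)\,\right].
\]
For Bob's $Q^{*}$ choices $\hat b_0+\hat b_1=-\sqrt2\,\hat z$ and $\hat b_0-\hat b_1=-\sqrt2\,\hat x$, so the bracket equals $4+\sqrt2\,((\hat a_0)_z+(\hat a_1)_x)$.

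Then I would maximise. Since $0\le\epsilon\le 3/4$ the prefactor $\tfrac{3}{2}-\epsilon$ is strictly positive, so the bracket is maximised over unit vectors precisely at $(\hat a_0)_z=(\hat a_1)_x=1$, i.e.\ $\hat a_0=\hat z$ and $\hat a_1=\hat x$ --- Alice's $\sigma_z/\sigma_x$ measurements of $Q^{*}$ --- with value $4+2\sqrt2$; hence $\langle u_A\rangle\le\tfrac{1}{16}(\tfrac{3}{2}-\epsilon)(4+2\sqrt2)=\langle u_A\rangle_{Q^{*}}$ and Alice has no profitable deviation. Running the identical computation with Alice fixed at $\hat a_0=\hat z,\ \hat a_1=\hat x$ and Bob deviating through unit Bloch vectors $(\hat b_0,\hat b_1)$, Eq.~(2) yields $\langle u_B\rangle=\tfrac{1}{16}(\tfrac{3}{2}+\epsilon)\,[\,4-(\hat a_0+\hat a_1)\cdot\hat b_0+(\hat a_1-\hat a_0)\cdot\hat b_1\,]$; because $|\hat a_0\pm\hat a_1|=\sqrt2$ and $\tfrac{3}{2}+\epsilon>0$ on the whole range, this is maximised exactly at $\hat b_0=-\tfrac1{\sqrt2}(\hat x+\hat z)$ and $\hat b_1=\tfrac1{\sqrt2}(\hat x-\hat z)$, Bob's $Q^{*}$ observables, with value $\langle u_B\rangle_{Q^{*}}$. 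Thus no unilateral deviation helps either player, for every $\epsilon$, proving the theorem.

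I expect the main obstacle to be making the first step airtight: showing that a fully general quantum deviation (randomised preprocessing, ancillas, arbitrary POVMs, postprocessing) cannot beat a deterministic assignment of projective qubit measurements to the two input values. This rests only on affineness of the payoff in the deviating party's effect operators together with convexity of the qubit effect set; once granted, the rest is a short explicit substitution and a trivial optimisation. It is worth noting that it is exactly the strict positivity of $\tfrac{3}{2}-\epsilon$ and $\tfrac{3}{2}+\epsilon$ throughout $[0,3/4]$ that makes the optimal measurement directions independent of $\epsilon$, so a single argument covers all of $G(\epsilon)$ --- unlike $PR^{*}$, which is an equilibrium only for $\epsilon\le 5/8$.
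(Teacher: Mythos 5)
Your proposal is correct and follows the same overall strategy as the paper's Appendix B: fix one player's $Q^*$ measurements, reduce any unilateral deviation of the other player to a pair of binary-outcome qubit POVMs, and maximise the affine payoff functional. Where you differ is in how the maximisation is executed. The paper keeps the full POVM parametrisation $X_{00}=\tfrac12(a_0\mathbb{I}+\vec a\cdot\vec\sigma)$ with $\lVert\vec a\rVert\le a_0\le 2-\lVert\vec a\rVert$, writes $\langle u_A\rangle_Q$ explicitly in these coordinates, and optimises directly, which forces a case split at $\epsilon=1/2$ because the coefficient of $a_0$ changes sign there. You instead invoke extremality (the extreme effects are $0$, $\mathbb{I}$, and rank-one projectors) to reduce to projective measurements, after which the singlet correlations produce the clean common factor $\tfrac32-\epsilon$ and a single $\epsilon$-independent Bloch-vector optimisation; this is tidier and makes transparent why one argument covers all $\epsilon\in[0,3/4]$. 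The price is that the trivial extreme points $E_0^{x_A}\in\{0,\mathbb{I}\}$ are \emph{not} covered by your factorised formula (which is valid only on the projective slice $a_0=1$ --- precisely the slice on which the troublesome $(\tfrac12-\epsilon)a_0$ term is constant), so "dismissed by direct inspection" is a step you must actually carry out: e.g.\ for Alice's input-$0$ slot one compares $(2-4\epsilon)+(3\sqrt2-2\sqrt2\epsilon)$ against $0$ and $2(2-4\epsilon)$, and similarly for the other three slots and for Bob, all of which do check out on $[0,3/4]$. With that inspection written down, your proof is complete; it also handles the preprocessing/postprocessing/ancilla reduction more explicitly than the paper, which relegates it to "easy to check".
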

\begin{proof}
The main steps in the proof are as follows; details of calculation are provided in Appendix B. First we consider that Alice unilaterally changes her local measurements and Bob keeps his measurements fixed to those in the quantum strategy $Q^*$. On receiving types $0$ and $1$, Alice respectively performs some two-outcome positive operator-valued measure (POVM) measurement $\{X_{00}+X_{01}=\mathbb{I}\}$ and $\{X_{10}+X_{11}=\mathbb{I}\}$, where $X_{00}$, $X_{01}$, $X_{00}$, and $X_{01}$ are positive operators acting on the two-dimensional complex Hilbert space $\mathbb{C}^2$. In each case Alice answers with the measurement outcome. We then derive the expression for maximum average payoff of Alice over all possible POVM measurements and find that the maximum is achieved with the strategy $Q^*$. This implies that Alice cannot increase her average payoff by deviating unilaterally from the quantum strategy $Q^*$. We then prove the same result when Alice's measurements are fixed and Bob changes his measurements to all possible POVMs. The calculations involved in the proof are provided in Appendix B. Finally, since we have maximized over all possible unilateral changes in one party measurement, it is easy to check that the results hold even under any preprocessing of inputs or postprocessing of measurement outcomes. 
\end{proof}
The quantum strategy $Q^*$, unlike the $PR^*$ strategy, cannot beat the bounds on average payoffs for Alice and Bob given by conditions (\ref{eq7}) and (\ref{eq8}). Interestingly, however, by making one of these bounds tighter, as derived in the following lemma, we can recover the strong property shown by the $PR^*$ strategy for a subset of the games $G(\epsilon)$.
\begin{lem}
In the parameter range $1/4\leq \epsilon \leq 1/2$, no classical-correlated equilibrium exists where Bob can play the strategy $S_3^{B}$ with a nonzero probability.\label{lem1}
\end{lem}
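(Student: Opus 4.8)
The plan is to argue by contradiction. Suppose $\{p_{ij}\}$ is a classical correlated advice for which obeying is a Nash equilibrium and $\sum_i p_{i3}>0$. One might first hope that $S_3^{B}$ is simply a dominated reply for Bob, so that it could never be used at all; it is not --- against $S_1^{A}$ only a reply heavily weighted on $S_1^{B}$ beats $S_3^{B}$, while against $S_4^{A}$ only one heavily weighted on $S_2^{B}$ does --- so Alice's incentive constraints must enter the argument.

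First I would condition on the event ``Bob is recommended $S_3^{B}$''. Because obeying is an equilibrium, conditioned on this event Alice plays $S_i^{A}$ with probability $q_i=p_{i3}/\sum_k p_{k3}$, and Bob's obedience requires $\sum_i q_i\,u_B(S_i^{A},S_3^{B})\ge\sum_i q_i\,u_B(S_i^{A},S_k^{B})$ for $k=1,2,4$, where $u_B(S_i^{A},S_j^{B})$ is Bob's average payoff for the pure profile, read off Table~\ref{tab2}. These are three linear inequalities in $(q_1,q_2,q_3,q_4)$; a short computation shows that for $1/4<\epsilon<1/2$ they can be met only if $q_1>0$ and $q_4>0$. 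Hence $p_{13}>0$ and $p_{43}>0$: any equilibrium that ever recommends $S_3^{B}$ must pair that recommendation, with positive probability, with both $S_1^{A}$ and $S_4^{A}$ (the endpoint values $\epsilon=1/4,1/2$ need a separate look).

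Next I would feed this back into Alice's constraints. Since $p_{43}>0$, with positive probability Alice is told $S_4^{A}$ while Bob plays $S_3^{B}$; but $u_A(S_1^{A},S_3^{B})-u_A(S_4^{A},S_3^{B})=\tfrac18-\tfrac{\epsilon}{4}>0$ for $\epsilon<1/2$, so against $S_3^{B}$ Alice strictly prefers $S_1^{A}$ to $S_4^{A}$, and inspection of Table~\ref{tab2} shows the unique column in which $S_4^{A}$ beats $S_1^{A}$ (for $\epsilon<1/2$) is $S_2^{B}$. Therefore Alice's obedience for the recommendation $S_4^{A}$ forces $p_{42}>0$. Now iterate: $p_{42}>0$ constrains Bob's obedience for $S_2^{B}$, which together with the positivities already established further restricts the conditional distributions, and so on. The claim is that this propagation of ``$p_{ij}>0$'' facts cannot be sustained for Alice and Bob simultaneously when $1/4\le\epsilon\le1/2$, so no such equilibrium exists.

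Closing that loop is the delicate step, and I expect it is cleanest to recast the problem as a feasibility question for the (finite-dimensional) polytope of classical correlated equilibria: list all of Alice's and Bob's obedience inequalities as linear constraints on $\{p_{ij}\}$ and show, via Farkas' lemma, that they force $\sum_i p_{i3}\le 0$, i.e.\ exhibit an explicit nonnegative combination of obedience inequalities that dominates $\sum_i p_{i3}$. Producing that certificate --- equivalently, running the above case analysis to termination --- is where the real work lies; everything upstream of it is only reading Table~\ref{tab2} and elementary algebra.
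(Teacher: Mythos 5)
Your framework is the right one, and it is the same one the paper (implicitly) uses: a classical correlated equilibrium is a distribution $\{p_{ij}\}$ satisfying the linear obedience constraints read off Table~\ref{tab2}, and the lemma asserts that these constraints are infeasible together with $\lambda_3=\sum_i p_{i3}>0$ when $1/4\leq\epsilon\leq 1/2$. Your intermediate deductions also check out: for $1/4<\epsilon<1/2$, Bob's obedience conditional on being told $S_3^{B}$ does force $q_1>0$ and $q_4>0$, and Alice's obedience at the recommendation $S_4^{A}$ then forces $p_{42}>0$ because $S_2^{B}$ is the only column where $S_4^{A}$ beats $S_1^{A}$. You are also right --- and this is a sharper observation than the paper's own two-sentence proof conveys --- that Bob's conditional constraints alone do not exclude $S_3^{B}$ (e.g.\ $q_1=q_4=1/2$ satisfies all three of them), so Alice's incentive constraints genuinely have to enter.

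The problem is that the proof is not finished. The decisive assertion --- that the propagation of forced positivities ``cannot be sustained,'' equivalently that some nonnegative combination of obedience inequalities dominates $\lambda_3$ --- \emph{is} the lemma, and you explicitly defer it (``producing that certificate is where the real work lies''). Until that Farkas certificate, or the terminating case analysis it encodes, is actually exhibited, nothing has been established; an adversary could still worry that the chain of implications closes into a consistent cycle rather than a contradiction. For comparison, the paper's own proof is no more detailed: it states that ``one can easily verify'' from Table~\ref{tab2} that any advice with $\lambda_3\neq 0$ leaves Bob a profitable deviation once Alice is assumed obedient, and leaves the verification to the reader. So you have chosen the intended route and set it up correctly, but your write-up is an announced strategy whose final, load-bearing step is missing; to count as a proof it must either display the explicit certificate or run the finite case analysis to an actual contradiction (including the endpoints $\epsilon=1/4$ and $\epsilon=1/2$ that you flag).
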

\begin{proof}
In a classically correlated strategy players are advised to implement strategy $(S_i^A, S_j^B)$ according to some probability distribution $\{p_{ij}: i,j\in\{1,2,3,4\}\}$. This would mean that Alice adopts strategy $S_i^A$ with probability $\mu_i=\sum_{j=1}^{4}p_{ij}$, and Bob plays strategy  $S_j^B$ with probability $\lambda_j=\sum_{i=1}^{4}p_{ij}$. Now, for the parameter range $1/4\leq \epsilon \leq 1/2$, from Table (\ref{tab2}), one can easily verify that any probability distribution $p_{ij}$ with $\lambda_3 \neq 0$ (i.e., Bob playing $S_3^B$ with a non zero probability) cannot be a classically correlated equilibrium. This is due to the fact that if Alice plays with such an advice (where $\lambda_3 \neq 0$), Bob can always deviate from his recommended strategy and increase his average payoff.
\end{proof}

\begin{thm}
The quantum strategy $Q^*$ gives a strong advantage if ~$\frac{3}{14}(3-\sqrt{2})\leq \epsilon \leq \frac{1}{4}(-8+7\sqrt{2})$,~ in the sense that, in this range, both the players beat their individual maximum average payoff over all possible classically correlated Nash equilibriums.\label{thm2}
\end{thm}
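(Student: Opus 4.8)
The plan is to combine Lemma~\ref{lem1} with a convexity argument to sharpen the bound~(\ref{eq7}) on Alice's classically correlated payoff, and then to reduce the theorem to a handful of linear inequalities in $\epsilon$. Since $\frac{3}{14}(3-\sqrt2)\approx 0.34$ and $\frac14(-8+7\sqrt2)\approx 0.475$, the interval in the statement lies inside $[1/4,1/2]$, so Lemma~\ref{lem1} applies throughout it: every classically correlated Nash equilibrium is a distribution $\{p_{ij}\}$ with $\lambda_3=\sum_{i}p_{i3}=0$, hence a convex mixture of pure profiles $(S_i^A,S_j^B)$ with $j\in\{1,2,4\}$. Because the payoff of such a mixture is the corresponding convex combination of the entries of Table~\ref{tab2}, the equilibrium payoff of Alice is at most the largest first coordinate occurring among those twelve profiles, and likewise for Bob.

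Next I would carry out that maximisation. On $1/4\le\epsilon\le1/2$, comparing the affine functions of $\epsilon$ appearing in the $j\ne3$ entries of Table~\ref{tab2} shows that Alice's payoff is largest either at $(S_1^A,S_1^B)$, where it equals $\frac34(1-\epsilon)$, or at $(S_4^A,S_2^B)$, where it equals $\frac7{16}$; every other $j\ne3$ entry is dominated by one of these on the interval, the two candidates themselves crossing at $\epsilon=5/12$. Both of these profiles are pure Nash equilibria in this range, so the classical maximum of Alice's payoff over all correlated Nash equilibria equals $\max\{\frac34(1-\epsilon),\frac7{16}\}$, which throughout $1/4<\epsilon<1/2$ is strictly below the value $\frac{11}{16}-\frac\epsilon2$ of~(\ref{eq7}) (that value being attained only by the now-forbidden profile $(S_1^A,S_3^B)$). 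For Bob, Lemma~\ref{lem1} gives no improvement, since the $j\ne3$ profile maximising $\langle u_B\rangle$ is $(S_2^A,S_4^B)$; the bound~(\ref{eq8}), $\langle u_B\rangle_C\le\frac{11}{16}+\frac\epsilon4$, therefore still holds against every admissible mixture.

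Finally I would compare these classical maxima with the quantum values $\langle u_A\rangle_{Q^*}=\frac14(1+\frac1{\sqrt2})(\frac32-\epsilon)$ and $\langle u_B\rangle_{Q^*}=\frac14(1+\frac1{\sqrt2})(\frac32+\epsilon)$, which---together with the fact that $Q^*$ is a Nash equilibrium---are already available. Using that $\max\{a,b\}<c$ holds iff $a<c$ and $b<c$, Alice exceeds her classical maximum exactly when $\langle u_A\rangle_{Q^*}>\frac34(1-\epsilon)$ and $\langle u_A\rangle_{Q^*}>\frac7{16}$; solving these two linear inequalities yields precisely $\epsilon>\frac{3}{14}(3-\sqrt2)$ and $\epsilon<\frac14(-8+7\sqrt2)$ respectively, i.e.\ the stated interval (with the inequalities becoming equalities at its two endpoints). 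For Bob, $\langle u_B\rangle_{Q^*}>\frac{11}{16}+\frac\epsilon4$ reduces to $\epsilon>\frac{5\sqrt2-6}{4}\approx0.27$, which is already implied by $\epsilon\ge\frac{3}{14}(3-\sqrt2)$, so Bob's advantage holds over the whole interval and adds no further constraint. Intersecting the conditions gives the claimed range $\frac{3}{14}(3-\sqrt2)\le\epsilon\le\frac14(-8+7\sqrt2)$.

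I expect the only step with genuine content to be the case analysis in the second paragraph: one has to verify that no $j\ne3$ profile other than $(S_1^A,S_1^B)$ and $(S_4^A,S_2^B)$ ever dominates on $[1/4,1/2]$, and to keep track of their crossover at $\epsilon=5/12$, which lies inside the interval of interest, so that the sharpened bound for Alice is genuinely piecewise. Everything after that---isolating $\epsilon$ in the comparisons with $\langle u_A\rangle_{Q^*}$ and $\langle u_B\rangle_{Q^*}$---is elementary algebra.
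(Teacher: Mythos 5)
Your proposal is correct and follows essentially the same route as the paper: apply Lemma~\ref{lem1} to tighten Alice's classical bound to $\max\{\frac{3}{4}(1-\epsilon),\frac{7}{16}\}$ on $[1/4,1/2]$, keep the bound~(\ref{eq8}) for Bob, and solve the resulting linear inequalities, which yield exactly the endpoints $\frac{3}{14}(3-\sqrt2)$ and $\frac14(-8+7\sqrt2)$ with Bob's condition subsumed. You simply make explicit the domination check among the $j\ne3$ profiles and the crossover at $\epsilon=5/12$, which the paper leaves implicit.
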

\begin{proof}
Using Lemma (\ref{lem1}), we get a tighter bound on the average payoff of Alice for classically correlated equilibrium, which is as follows:
$$\left\langle u_A\right\rangle_{C}\le \mbox{Max}\left\{\frac{7}{16},~\frac{3}{4}-\frac{3\epsilon}{4},\right\}~~\mbox{for}~~\frac{1}{4}\leq \epsilon \leq \frac{1}{2}.$$
For Bob, in the parameter range $1/4\leq \epsilon \leq 1/2 $ we apply the same bound as given in the condition (\ref{eq8}). From these bounds, we find that $\left\langle u_A\right\rangle_{Q^*}>\left\langle u_A\right\rangle_{C}$ and $\left\langle u_B\right\rangle_{Q^*}>\left\langle u_B\right\rangle_{C}$ if $c_1\leq \epsilon \leq c_2$ where $c_1=\frac{3}{14}(3-\sqrt{2})\approxeq 0.34$ and $c_2=(-8+7\sqrt{2})/4 \approxeq 0.47$.
\end{proof}
Theorem (\ref{thm2}) shows that in the class of conflicting-interest Bayesian games that we consider, there are several examples of games where quantum entanglement gives a much stronger advantage than any such game proposed so far \cite{Pappa,Roy'2016}. For example, $G(2/5)$, $G(3/8)$, etc. have the property that the individual average payoff of both players with the quantum strategy $Q^*$ exceeds the maximum average payoff that each player can achieve in any classically correlated equilibrium; this property does not hold, for example, for the game $G(0)$ proposed in \cite{Pappa}.

\section{Concluding remarks}
To conclude, in this work, by designing a class of two-player conflicting-interest Bayesian games $\{G(\epsilon): 0\leq \epsilon \leq 3/4\}$, we show the existence of games where entanglement acts as a more powerful resource than those discovered so far \cite{Pappa,Roy'2016}, leading to strong quantum solution(s). Our work also gives several examples of unfair quantum Nash equilibriums conjectured in \cite{Roy'2016}. Interestingly, for a given quantum advice (an entangled state), there are unfair quantum Nash equilibrium solutions, which also gives the optimal social welfare solutions discussed in \cite{Roy'2016}.

In contrast to the classically correlated advice, a quantum setting for the games which provides entanglement as quantum advice and gives freedom to arbitrarily choose measurement settings to both the players is a weaker quantum advice and can still give strong quantum solutions. It will be reasonable to consider in future works an even stronger version of quantum advice where the distributed quantum correlation is exactly specified, i.e., where the adviser provides both the entangled state and the measurements to the players as black boxes generating exactly one quantum correlation (i.e., a single quantum probability distribution). In such frameworks it is quite likely that the quantum social welfare solutions presented in \cite{Roy'2016} also turn out to be Nash equilibrium.

The key quantum feature which leads to better quantum Nash equilibriums is the existence of nonlocal correlations in the quantum (physical) world. Since all bipartite pure entangled states demonstrate this feature \cite{gisin}, it may hold that any such a quantum state will give better quantum Nash equilibrium in some Bayesian games of conflicting interests.

\ \\

\noindent {\bf Acknowledgements.}
Part of the work was done during the first author's visit to the R. C. Bose Centre for Cryptology and Security, Indian Statistical Institute Kolkata. Fruitful discussions with the authors of the work \cite{Roy'2016} is thankfully acknowledged. This work was supported in part by the European Union Seventh Framework Programme (FP7/2007-2013) under the RAQUEL (Grant Agreement No. 323970) project, QALGO (Grant Agreement No. 600700) project, the ERC Advanced Grant MQC, and  the Brazilian ministries MEC and MCTIC. The authors also take this opportunity to thank the anonymous reviewer(s) for the editorial and technical feedback that helped to achieve a more coherent presentation.

\appendix
\begin{widetext}
\section{Relation between the games  \boldmath{$ G(\epsilon)$} and Bell-CHSH inequality}
Consider the following Bell-CHSH inequality:
\begin{equation}
\mathbb{B}=\langle A_0B_0\rangle+\langle A_0B_1 \rangle+\langle A_1B_0 \rangle-\langle A_1B_1\rangle \leq 2. \label{bi2}
\end{equation}
For $i,j\in \{0,1\}$, $A_{i},B_{j}$ takes a value $\pm 1$ and are results (outputs) of observables of the two parties, respectively indexed by $i$ and $j$.
In the language of our game, it can be thought that $i,j$ are the types (input) of the two players, and $A_i,B_j \in\{\pm 1\}$ are their respective actions (output). The expected values of product of outcomes are
\begin{eqnarray}
\langle A_{i}B_{j}\rangle =P(-1,-1|i,j)+P(+1,+1|i,j) -P(-1,+1|i,j)-P(+1,-1|i,j).\nonumber 
\end{eqnarray}
We do a following relabeling of the outcomes, $-1\mapsto 0$ and  $+1\mapsto 1$, and then we get
\begin{eqnarray}
\langle A_iB_j\rangle= P(0,0|i,j)+P(1,1|i,j)-P(0,1|i,j)-P(1,0|i,j).
\label{exp}
\end{eqnarray}
Upon inserting Eq. (\ref{exp}) in the Bell-CHSH expression $\mathbb{B}=\langle A_0B_0\rangle+\langle A_0B_1 \rangle+\langle A_1B_0 \rangle-\langle A_1B_1\rangle$, using normalization conditions for probabilities, and from the expressions for $\left\langle u_A\right\rangle$ and $\left\langle u_B\right\rangle$ respectively given by Eqs. (\ref{fa}) and (\ref{fb}) one can obtain that $\left\langle u_A\right\rangle+\left\langle u_B\right\rangle=\frac{3}{16}(\mathbb{B}+4)$.

\section{Complete proof of the Theorem 1.}
In the quantum strategy $Q^*$ the state shared between Alice and Bob is the two-qubit singlet state:
\begin{equation}
|\Psi_{AB}^{-}\rangle=\frac{1}{\sqrt{2}}(~|01\rangle-|10\rangle~).\label{eqa1}
\end{equation}
Suppose one player's measurement strategy is fixed to that of $Q^*$. Let the other player deviate from $Q^*$, which he or she can do by choosing two arbitrary POVM measurements on receiving the respective types $0$ and $1$, which in general can be respectively expressed by $M_0: X_{00}+X_{01}=\mathbb{I}$ and $M_1: X_{10}+X_{11}=\mathbb{I}$, where
\begin{eqnarray} 
X_{00}= \frac{1}{2}\left( \begin{smallmatrix} a_0+a_3&a_1-i a_2\\a_1+i a_2&a_0-a_3\end{smallmatrix} \right),~~~~\mbox{where}~ a_0,a_1,a_2,a_3 \in \mathbb{R}~~\mbox{and}~~\sqrt{a_1^2+a_2^2+a_3^2}\leq a_0 \leq 2-\sqrt{a_1^2+a_2^2+a_3^2},\label{eqa2}\\
X_{10}= \frac{1}{2}\left( \begin{smallmatrix} b_0+b_3&b_1-i b_2\\b_1+i b_2&b_0-b_3\end{smallmatrix} \right),~~~~\mbox{where} ~b_0,b_1,b_2,b_3 \in \mathbb{R}~~\mbox{and}~~\sqrt{b_1^2+b_2^2+b_3^2}\leq b_0 \leq 2-\sqrt{b_1^2+b_2^2+b_3^2}. \label{eqa3}
\end{eqnarray}
Let us denote $\sqrt{a_1^2+a_2^2+a_3^2}=||\vec{a}||$ and $\sqrt{b_1^2+b_2^2+b_3^2}=||\vec{b}||$. Note that $0\leq a_0,~b_0,~||\vec{a}||,~||\vec{b}||\leq1 $, and for all $i\in\{1,2,3\}$, $a_i\leq ||\vec{a}||$ and $b_i\leq ||\vec{b}||$.

The proof of Theorem 1 is as follows:
\begin{proof}
	Let Bob's strategy be fixed, i.e., his measurements are $-\frac{1}{\sqrt{2}}(\sigma_x+\sigma_z)$ on receiving $0$ and $\frac{1}{\sqrt{2}}(\sigma_x-\sigma_z)$ on receiving $1$, and let Alice's two measurements be $M_0$ and $M_1$. Then on calculating, the analytical expression for Alice's average payoff turns out to be: 
	\begin{equation}
	\left\langle u_A\right\rangle_{Q}=\frac{1}{32}\left[(9-4\epsilon)+\left\{(2-4\epsilon)a_0+(3\sqrt{2}-2\sqrt{2}\epsilon)a_3\right\}+\left\{b_0+(3\sqrt{2}-2\sqrt{2}\epsilon)b_1\right\}\right].
	\end{equation}
	\emph{Case 1.} If $0\leq \epsilon \leq 1/2$, then,
	\begin{eqnarray}
	\left \langle u_A\right\rangle_{Q} &\leq& \frac{1}{32}\left[(9-4\epsilon)+\left\{(2-4\epsilon)(2-||\vec{a}||)+(3\sqrt{2}-2\sqrt{2}\epsilon)||\vec{a}||\right\}+\left\{(2-||\vec{b}||)+(3\sqrt{2}-2\sqrt{2}\epsilon)||\vec{b}||\right\}\right] \nonumber \\
	&=& \frac{1}{32}\left[(15-12\epsilon)+\left\{(-2+3\sqrt{2})+(4+2\sqrt{2})\epsilon\right\} ||\vec{a}||+\left\{(3\sqrt{2}-1)-2\sqrt{2}\epsilon\right\} ||\vec{b}||\right] \nonumber \\
	&\leq& \frac{1}{32}\left[(15-12\epsilon)+\left\{(-2+3\sqrt{2})+(4+2\sqrt{2})\epsilon\right\} +\left\{(3\sqrt{2}-1)-2\sqrt{2}\epsilon\right\}\right] \nonumber \\
	&=& \frac{1}{4}\left(1+\frac{1}{\sqrt{2}}\right)\left(\frac{3}{2}-\epsilon \right). \nonumber
	\end{eqnarray}
	\emph{Case 2.} If $1/2<\epsilon \leq 3/4$, then,
	\begin{eqnarray}
	\left \langle u_A\right\rangle_{Q} &\leq& \frac{1}{32}\left[(9-4\epsilon)+\left\{(2-4\epsilon)||\vec{a}||+(3\sqrt{2}-2\sqrt{2}\epsilon)||\vec{a}||\right\}+\left\{(2-||\vec{b}||)+(3\sqrt{2}-2\sqrt{2}\epsilon)||\vec{b}||\right\}\right] \nonumber \\
	&=& \frac{1}{32}\left[(11-4\epsilon)+\left\{(2+3\sqrt{2})-(4+2\sqrt{2})\epsilon\right\} ||\vec{a}||+\left\{(3\sqrt{2}-1)-2\sqrt{2}\epsilon\right\} ||\vec{b}||\right] \nonumber \\
	&\leq& \frac{1}{32}\left[(11-4\epsilon)+\left\{(2+3\sqrt{2})-(4+2\sqrt{2})\epsilon\right\} +\left\{(3\sqrt{2}-1)-2\sqrt{2}\epsilon\right\}\right] \nonumber \\
	&=& \frac{1}{4}\left(1+\frac{1}{\sqrt{2}}\right)\left(\frac{3}{2}-\epsilon \right). \nonumber
	\end{eqnarray}
	
	Therefore for any $0\leq\epsilon\leq 3/4$, the maximum value of $\left \langle u_A\right\rangle_{Q}$ over all $M_0$ and $M_1$ turns out to be $\frac{1}{4}\left(1+\frac{1}{\sqrt{2}}\right)\left(\frac{3}{2}-\epsilon \right)$, which is the same as  $\left \langle u_A\right\rangle_{Q^*}$. Hence, Alice cannot increase her average payoff by deviating.
	
	Now let Alice's strategy be fixed, i.e., her measurements are $\sigma_z$ on receiving $0$ and $\sigma_x$ on receiving $1$, and let Bob's two measurements be $M_0$ and $M_1$. Then, upon calculating, the analytical expression for Bob's average payoff turns out to be
	\begin{equation}
	\left\langle u_B\right\rangle_{Q}=\frac{1}{32}\left[15+\left\{(-2+4\epsilon)a_0+(-3-2\epsilon)a_1+(-3-2\epsilon)a_3\right\}+\left\{(-1+4\epsilon)b_0+(3+2\epsilon)b_1+(-3-2\epsilon)b_3\right\}\right].
	\end{equation}
	By similar steps as adopted for maximizing the analytical expression for Alice's average payoff, one can obtain the maximum for Bob. The maximum value of $\left \langle u_B\right\rangle_{Q}$ over all $M_0$ and $M_1$ turns out to be $\frac{1}{4}\left(1+\frac{1}{\sqrt{2}}\right)\left(\frac{3}{2}+\epsilon \right)$, which is same as  $\left \langle u_B\right\rangle_{Q^*}$. Thus Bob cannot increase his average payoff by deviating. Therefore, we can finally conclude that the quantum strategy $Q^*$ is a Nash equilibrium.
\end{proof}
\end{widetext}


\end{document}